\theoremstyle{plain}
\newtheorem{theorem}{Theorem}
\newtheorem{proposition}[theorem]{Proposition}
\newtheorem{lemma}[theorem]{Lemma}
\theoremstyle{definition}
\newtheorem{example}{Example}
\theoremstyle{remark}
\renewcommand{\geq}{\geqslant}
\newcommand{\dev}{\partial}
\begin{document}

\title{Multidimensional Integrable Deformations of integrable PDEs}
\author[Casati]{Matteo Casati}
\email[Casati]{matteo@nbu.edu.cn}
\author[Zhang]{Danda Zhang}
\email[Zhang]{zhangdanda@nbu.edu.cn}
\address[Casati, Zhang]{School of Mathematics and Statistics\\Ningbo University\\Ningbo City, Zhejiang Province, PRC}
\keywords{Lax integrability, Integrable hierarchies}
\subjclass[2020]{37K10}
%\date{}
\begin{abstract}
In a recent series of papers by Lou et al., it was conjectured that higher dimensional integrable equations may be constructed by utilizing some conservation laws of $(1+1)$-dimensional systems. We prove that the deformation algorithm introduced in \cite{ljh22}, applied to Lax integrable $(1+1)$-dimensional systems, produces Lax integrable higher dimensional systems. The same property is enjoyed by the generalized deformation algorithm introduced in \cite{ljh23-1}; we present a novel example of a $(2+1)$-dimensional deformation of KdV equation obtained by generalized deformation. The deformed systems obtained by such procedure, however, pose a serious challenge because most of the mathematical structures that the $(1+1)$-dimensional systems possess is lost.
\end{abstract}
\maketitle

\tableofcontents

%\section{Introduction}
\section{Deformation algorithm}
A simple procedure to obtain, starting from integrable (1+1)-dimensional PDEs, higher dimensional integrable system has been introduced and applied to several examples in a recent series of papers \cite{wl23,ljh22,ljh23-1,ljh23-2}. As the authors remark, while the dimensional \emph{reduction} of a system is a relatively straightforward procedure, lifting a simple system to a more complicated one without losing its key properties requires much more caution.

The deformation algorithm first introduced in \cite{ljh22} proved itself reliable and versatile; the aim of this paper is presenting a proof that it indeed preserves the integrability of the system.

The original version of the algorithm applies to scalar evolutionary PDEs with conserved densities depending only on the solution (and not on its spatial derivatives); this can be generalized in a straightforward way to  vector-valued (i.e. multi-component systems of) PDEs.

Let us consider a $(1+1)$-dimensional evolutionary system of the form
\begin{equation}\label{eq:11}
u^i_t=F^i(u,u_x,\dots,u_{xn})
\end{equation}
with $u^i=u^i(t,x)$, $i=1,\ldots,N$ and $u_{xn}=\dev_x^nu$. We assume that the system possesses $(D-1)$ conservation laws of the form
\begin{equation}\label{eq:conslaw}
\dev_t\rho_\alpha(u)=\dev_x J_\alpha(u,u_x,\ldots,u_{xn}),\qquad\qquad\alpha=1,\ldots,D-1.
\end{equation}
In other words, $\rho_\alpha$ are conserved densities and $J_\alpha$ their corresponding fluxes or \emph{currents}  \cite{Ol93}; note that the densities $\rho_\alpha$ depend only on the variables $u^i$. 

Let us denote as $\dev_x$ the total derivative operator w.r.t. the spatial variable $x$, defined as usual on differential polynomials $f=f(u,u_x,\ldots,u_{xn})$
$$
\dev_x f=\sum_{i=1}^N\sum_{p\geq 0}\frac{\dev f}{\dev u^i_{xp}}u^i_{x(p+1)}.
$$
In the rest of the paper we will adopt the Einstein convention of skipping the summation symbol and regarding the sum over double indices as extended up to the ``natural'' number of terms.

The total time derivative $\dev_t$ acts, by the chain rule and according to \eqref{eq:11}, as
\begin{equation}
\dev_t f=\frac{\dev f}{\dev u^i_{xp}}\dev_t\dev_x^p u^i=\frac{\dev f}{\dev u^i_{xp}}\dev^p_x\dev_t u^i=\frac{\dev f}{\dev u^i_{xp}}\dev_x^pF^i.
\end{equation}
We introduce additional spatial variables $x^\alpha$, $\alpha=1,\ldots,D-1$ (with corresponding total derivatives $\dev_{x^\alpha}$: obviously $[\dev_x,\dev_{x^\alpha}]=[\dev_{x^\alpha},\dev_{x^\beta}]=0$) and consider functions $u^i$ of time $t$ and \emph{$D$ spatial variables} $(x,x^1,\ldots,x^{D-1})$.

Following the deformation algorithm \cite{ljh22}, we introduce the deformed operators
\begin{align*}
\hat{L}&=\dev_x+\rho_\alpha\dev_{x^\alpha},&\hat{T}&=\dev_t+\bar{J}_\alpha\dev_{x^\alpha}
\end{align*}
where we denote $\bar{J}_\alpha$ the deformed conserved currents $J_\alpha(u,\hat{L}u,\ldots,\hat{L}^nu)$. In general, we will denote (for both differential polynomials and operators)
\begin{equation}
\bar{f}:=f(\dev_x;u,\dev_x u,\ldots,\dev_x^n u)\big|_{\dev_x^n \to\hat{L}^n }.
\end{equation}

The deformed $(D+1)$ dimensional system is then given by
\begin{equation}\label{eq:def11}
\hat{T} u^i=\bar{F}^i
\end{equation}
or, in explicit evolutionary form,
\begin{equation}\label{eq:expldef}
u^i_t=\bar{F}^i-u^i_{x^\alpha}\bar{J}_{\alpha}.
\end{equation}
\subsection*{Main Theorem} Equation \eqref{eq:def11} is (Lax) integrable as a $(D+1)$-dimensional system if Equation \eqref{eq:11} is (Lax) integrable as a $(1+1)$-dimensional system and $\{u^i(t,x;x^1,\ldots,x^{D-1})\}_{i=1}^N$ is one of its solutions. 

\vspace{1em}
The proof of the Main Theorem is presented in the next Sections. The paper is organized as it follows: in Section \ref{sec:Lax} we show that \eqref{eq:def11} admits a Lax representation, obtained by a deformation of the original Lax pair of \eqref{eq:11}. The main technical result required to prove the statement is the commutativity of $\hat{L}$ and $\hat{T}$, which shows that they can replace everywhere the total $x$- and $t$-derivatives in the construction and in the proofs.
In Section \ref{sec:flows} we prove that the deformation algorithm preserves the integrable hierarchy of the original system; in practical terms, that all the higher symmetries of the system are deformed into symmetries. In Section \ref{sec:general} we briefly present the generalized deformation algorithm recently appeared in \cite{ljh23-1} and prove that the operators $\hat{L}$ and $\hat{T}$ commute in this case, too. As an example of a generalized deformation of an evolutionary integrable system, we present a novel $2$-dimensional deformation of the KdV equation. Finally, in the conclusion we offer our viewpoint on the multidimensional integrable equations obtained via the deformation algorithm.
\section{Lax integrability}\label{sec:Lax}
\begin{proposition}\label{thm:commTL}
Let $\{u^i(t,x;x^1,\ldots,x^{D-1})\}_{i=1}^N$ be a solution of \eqref{eq:11} and $\hat{L}$, $\hat{T}$ defined according to the deformation algorithm. Then
$[\hat{L},\hat{T}]=0$.
\end{proposition}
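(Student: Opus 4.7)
The plan is to compute $[\hat L,\hat T]$ directly and reduce its vanishing to the conservation laws \eqref{eq:conslaw} in their deformed form.

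First I will expand the commutator. Since $\dev_x$, $\dev_t$ and the $\dev_{x^\alpha}$ mutually commute as flat partial derivatives, only terms in which a derivative hits one of the coefficients $\rho_\alpha$ or $\bar J_\beta$ can survive; after collecting them one lands on the compact form
\[
[\hat L,\hat T]=\bigl(\hat L\,\bar J_\beta-\hat T\,\rho_\beta\bigr)\,\dev_{x^\beta}.
\]
The statement therefore reduces to the $D-1$ scalar identities $\hat L\bar J_\beta=\hat T\rho_\beta$, one for each $\beta$.

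The technical heart of the proof, which I would isolate as a short preliminary lemma, is that the bar operation intertwines the total $x$-derivative with $\hat L$ on differential polynomials: for every $f(u,u_x,\dots,u_{xn})$ one has $\hat L\bar f=\overline{\dev_x f}$. This is nothing more than the chain rule together with the observation $\hat L(\hat L^k u)=\hat L^{k+1}u$, which mirrors $\dev_x(u_{xk})=u_{x(k+1)}$. Applied to $f=J_\beta$ this immediately yields $\hat L\bar J_\beta=\overline{\dev_x J_\beta}$.

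To close the argument I will rewrite \eqref{eq:conslaw} as a polynomial identity. Since $\rho_\beta$ depends only on $u$, \eqref{eq:conslaw} is equivalent, modulo \eqref{eq:11}, to $\dev_x J_\beta=\frac{\dev\rho_\beta}{\dev u^i}\,F^i$. Barring both sides, and using that the derivative-free coefficient $\frac{\dev\rho_\beta}{\dev u^i}$ is bar-invariant, one obtains $\overline{\dev_x J_\beta}=\frac{\dev\rho_\beta}{\dev u^i}\,\bar F^i$. On the other hand, by the same chain-rule argument, $\hat T\rho_\beta=\frac{\dev\rho_\beta}{\dev u^i}\,\hat T u^i$, and the hypothesis that $u$ is a solution (read, in the deformed setting, as $\hat T u^i=\bar F^i$, which is \eqref{eq:def11}) makes this equal to $\frac{\dev\rho_\beta}{\dev u^i}\,\bar F^i$ as well. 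The two sides match and the commutator vanishes.

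The only genuine obstacle is the intertwining lemma $\hat L\bar f=\overline{\dev_x f}$; once it is in hand, everything else is a transparent application of the chain rule and of the original conservation law to bar-evaluated quantities.
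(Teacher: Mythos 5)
Your proposal is correct and follows essentially the same route as the paper: both expand the commutator to $\bigl(\hat L\bar J_\beta-\hat T\rho_\beta\bigr)\dev_{x^\beta}$, identify $\hat T\rho_\beta$ with $\frac{\dev\rho_\beta}{\dev u^i}\bar F^i$ via the chain rule and the deformed equation, identify $\hat L\bar J_\beta$ with the barred $\dev_x J_\beta$, and conclude from the barred conservation law. The only difference is presentational: you isolate the intertwining identity $\hat L\bar f=\overline{\dev_x f}$ as an explicit lemma, which the paper merely asserts in passing.
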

\begin{proof}
We recall that $\rho_\alpha$ and $J_\alpha$ are, respectively, conserved quantities and currents for Equation \eqref{eq:11}. This means that they satisfy Equation \eqref{eq:conslaw}. We now want to prove that the same relation holds true replacing the $x$-derivative with $\hat{L}$ and the $t$-derivative with $\hat{T}$, i.e. $\hat{T}\rho_\alpha=\hat{L}\bar{J}_\alpha$. By chain rule and Equation \eqref{eq:11}, we clearly have $\dev_t\rho_\alpha=(\dev_{u^i}\rho_\alpha) u^i_t=(\dev_{u^i}\rho_\alpha) F^i$. Similarly,
\begin{equation}\label{eq:Trho}
\hat{T}\rho_\alpha=\frac{\dev \rho_\alpha}{\dev u^i}\hat{T}u^i=\frac{\dev \rho_\alpha}{\dev u^i}\bar{F}^i=\left.\left(\dev_t\rho_\alpha\right)\right|_{\dev_x^n\to\hat{L}^n}.
\end{equation}
On the other hand, $\hat{L}\bar{J}_\alpha$ is of the same form of $\dev_xJ_\alpha$ after replacing all the $x$-derivatives with $\hat{L}$. We then have
\begin{equation}\label{eq:defocons}
\hat{T}\rho_\alpha-\hat{L}\bar{J}_\alpha=\left.\left(\dev_t\rho_\alpha-\dev_xJ_\alpha\right)\right|_{\dev^n_x\to\hat{L}^n}=0.
\end{equation}
Finally, an explicit computation for $[\hat{L},\hat{T}]$ yields
\begin{equation}
\begin{split}
[\hat{L},\hat{T}]&=\left(\left(\dev_x+\rho_\alpha\dev_{x^\alpha}\right)\bar{J}_\beta-\left(\dev_t+\bar{J}_\alpha\dev_{x^\alpha}\right)\rho_\beta\right)\dev_{x^\beta}\\
&=\left(\hat{L}\bar{J}_\beta-\hat{T}\rho_\beta\right)\dev_{x^\beta}.
\end{split}
\end{equation}
Therefore, the commutator vanishes due to \eqref{eq:defocons}.
\end{proof}

Let us consider the Lax pair for the original equation, arising as the compatibility condition for a linear system
\begin{equation}
\left\{\begin{array}{lcl} M\psi&=&0\\\psi_t&=&N\psi\end{array}\right.
\end{equation} 
with $M=M(\dev_x;u,u_x,\ldots,u_{xn})$, $N=N(\dev_x;u,u_x,\ldots,u_{xn})$. From $(\dev_t-N) M\psi=M(\dev_t-N) \psi$ we have the usual Lax pair representation
\begin{equation}
M_t=[N,M]
\end{equation}
which is equivalent to Equation \eqref{eq:11}. The existence of a Lax representation is a sufficient condition to characterize the integrability of the system.

Let us now introduce the deformed operator $\bar{M}$ and $\bar{N}$, built with the usual recipe, and consider the linear system
\begin{equation}\label{eq:defolinsyst}
\left\{\begin{array}{lcl} \bar{M}\psi&=&0\\\hat{T}\psi&=&\bar{N}\psi.\end{array}\right.
\end{equation}
\begin{proposition}
A compatibility condition for the system \eqref{eq:defolinsyst} is the Lax representation
\begin{equation}\label{eq:defolax}
\hat{T}(\bar{M})=[\bar{N},\bar{M}].
\end{equation}
\end{proposition}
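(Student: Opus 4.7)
The plan is to transcribe the classical derivation of a Lax equation from the compatibility of a linear scattering problem, replacing $\dev_x$ by $\hat{L}$ and $\dev_t$ by $\hat{T}$ throughout; the crucial new input is the commutativity $[\hat{L},\hat{T}]=0$ established in Proposition~\ref{thm:commTL}.

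First I would verify that $\hat{T}$ acts as a derivation on products of the form $\bar{M}\psi$. Writing $\bar{M}=\sum_k a_k\hat{L}^k$, where the coefficients $a_k$ are differential polynomials in $u$ and its $\hat{L}$-derivatives, the ordinary Leibniz rule applied to $\hat{T}(a_k\hat{L}^k\psi)$ produces $(\hat{T}a_k)\hat{L}^k\psi+a_k\hat{T}(\hat{L}^k\psi)$. Here the commutativity $[\hat{L},\hat{T}]=0$ lets me move $\hat{T}$ past each factor of $\hat{L}$, rewriting the second term as $a_k\hat{L}^k(\hat{T}\psi)$. Summing over $k$ I would obtain
$$\hat{T}(\bar{M}\psi)=(\hat{T}\bar{M})\psi+\bar{M}(\hat{T}\psi),$$
where the operator $\hat{T}\bar{M}$ is defined by applying $\hat{T}$ coefficient-wise, $\hat{T}\bar{M}:=\sum_k(\hat{T}a_k)\hat{L}^k$.

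Next I would apply $\hat{T}$ to the first equation of \eqref{eq:defolinsyst} and substitute the second:
$$0=\hat{T}(\bar{M}\psi)=(\hat{T}\bar{M})\psi+\bar{M}\bar{N}\psi.$$
Since $\bar{M}\psi=0$ gives $\bar{N}\bar{M}\psi=0$, I can rewrite the right-hand side as $-\bar{M}\bar{N}\psi=(\bar{N}\bar{M}-\bar{M}\bar{N})\psi=[\bar{N},\bar{M}]\psi$, obtaining $(\hat{T}\bar{M})\psi=[\bar{N},\bar{M}]\psi$. Requiring this to hold on the full solution space of the scattering problem (equivalently, viewing the two sides as differential operators that must agree on a sufficiently large class of test functions $\psi$) yields the operator identity \eqref{eq:defolax}.

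The only genuine obstacle is step one: without $[\hat{L},\hat{T}]=0$, extra terms of the form $[\hat{T},\hat{L}^k]$ would appear when moving $\hat{T}$ past the powers of $\hat{L}$ in $\bar{M}$, spoiling the derivation property and inserting spurious contributions into the commutator $[\bar{N},\bar{M}]$. Proposition~\ref{thm:commTL} is precisely what is needed to prevent this, and once that is in hand the rest of the argument is a direct transcription of the standard undeformed Lax-pair calculation.
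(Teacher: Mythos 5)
Your argument is correct and follows essentially the same route as the paper: both split $\hat{T}\circ\bar{M}$ into a coefficient-wise action $\hat{T}(\bar{M})$ plus $\bar{M}\circ\hat{T}$ via the commutativity $[\hat{L},\hat{T}]=0$ from Proposition~\ref{thm:commTL}, and then use the two equations of the linear system to arrive at $\bigl(\hat{T}(\bar{M})-[\bar{N},\bar{M}]\bigr)\psi=0$. The only cosmetic difference is that the paper starts from the cross-differentiated expression $(\hat{T}-\bar{N})\bar{M}\psi-\bar{M}(\hat{T}-\bar{N})\psi=0$ while you apply $\hat{T}$ to $\bar{M}\psi=0$ and insert $\bar{N}\bar{M}\psi=0$ afterwards; these are the same computation reorganized.
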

\begin{proof}
From $\bar{M}\psi=0$ and $(\hat{T}-\bar{N})\psi=0$ we have the condition
$$
(\hat{T}-\bar{N})\bar{M}\psi-\bar{M}(\hat{T}-\bar{N})\psi=0.
$$
Now, note that $\bar{M}$ (resp. $\bar{N}$) is of the generic form $$\bar{M}=\sum_{k} a_k(u,\hat{L}u,\ldots)\hat{L}^k,$$ so $\hat{T}\circ\bar{M}$ is
$$
\hat{T}\left(\sum_{k} a_k \hat{L}^k\right)=\sum_{k} \hat{T}(a_k)\hat{L}^k+\sum_{k} a_k\hat{T}\hat{L}^k.
$$
From Proposition \ref{thm:commTL} we can rewrite the last term as $\bar{M}\hat{T}$. Denoting $\hat{T}(\bar{M})$ the operator where $\hat{T}$ acts only on the coefficients of $\bar{M}$, we can then rewrite the compatibility condition as
$$
\left(\hat{T}(\bar{M})-[\bar{N},\bar{M}]\right)\psi=0
$$
which is \eqref{eq:defolax} as claimed.
\end{proof}
The following Lemma extends the result we found in \textcolor{blue}{E}quation \eqref{eq:Trho} to any differential polynomial. Note, however, that the proof relies on Proposition \ref{thm:commTL}.
\begin{lemma}\label{thm:mainlemma}
Let $f=f(u,u_x,\dots,u_{xn})$. Then
$$
\hat{T}\bar{f}=\left(\dev_t f\right)\big|_{\dev_x^n u^i\to \hat{L}^nu^i}.
$$
\end{lemma}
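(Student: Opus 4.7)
The plan is to expand $\hat{T}\bar f$ by the chain rule, use Proposition \ref{thm:commTL} to push $\hat{T}$ past the powers of $\hat{L}$, and then recognise the result as the bar of $\partial_t f$. Before starting, I would note the two small structural facts that make the computation clean: both $\hat{L}=\partial_x+\rho_\alpha\partial_{x^\alpha}$ and $\hat{T}=\partial_t+\bar J_\alpha\partial_{x^\alpha}$ act as derivations on the relevant algebra of differential expressions (they are sums of commuting derivations multiplied by coefficient functions), and the bar operation is, by construction, the algebra morphism sending $u^i_{xp}\mapsto \hat L^p u^i$.

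First I would write
\[
\hat T\bar f \;=\; \sum_{p\geq 0}\left.\frac{\partial f}{\partial u^i_{xp}}\right|_{\mathrm{bar}}\hat T\bigl(\hat L^p u^i\bigr),
\]
which is valid precisely because $\hat T$ is a derivation and $\bar f$ is the composition of $f$ with the substitutions $u^i_{xp}\mapsto \hat L^p u^i$. Next, by Proposition \ref{thm:commTL} the operators $\hat L$ and $\hat T$ commute, so $\hat T(\hat L^p u^i)=\hat L^p\hat T u^i=\hat L^p\bar F^i$. This turns the sum into
\[
\hat T\bar f \;=\; \sum_{p\geq 0}\left.\frac{\partial f}{\partial u^i_{xp}}\right|_{\mathrm{bar}}\hat L^p\bar F^i.
\]

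The main technical point is to identify $\hat L^p\bar F^i$ with $\overline{\partial_x^p F^i}$. I would prove this by induction on $p$: the case $p=0$ is the definition of the bar operation, and for the inductive step, using again that $\hat L$ is a derivation and that $\overline{(\,\cdot\,)}$ is an algebra morphism,
\[
\hat L\bigl(\hat L^p \bar F^i\bigr) \;=\; \hat L\bigl(\overline{\partial_x^p F^i}\bigr)
\;=\; \left.\frac{\partial (\partial_x^p F^i)}{\partial u^j_{xq}}\right|_{\mathrm{bar}}\hat L^{q+1}u^j
\;=\; \overline{\partial_x^{p+1} F^i},
\]
where the last equality is the chain-rule expansion of $\partial_x(\partial_x^p F^i)$ after applying the bar operation. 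Substituting this back gives
\[
\hat T\bar f \;=\; \sum_{p\geq 0}\left.\frac{\partial f}{\partial u^i_{xp}}\right|_{\mathrm{bar}}\overline{\partial_x^p F^i}
\;=\; \overline{\sum_{p\geq 0}\frac{\partial f}{\partial u^i_{xp}}\partial_x^p F^i}
\;=\; \left.(\partial_t f)\right|_{\partial_x^n u^i\to\hat L^n u^i},
\]
where in the last step I use the chain-rule formula for $\partial_t f$ together with \eqref{eq:11}.

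The only part that needs genuine care is the inductive identity $\hat L^p\bar F^i=\overline{\partial_x^p F^i}$, i.e.\ checking that the bar operation intertwines $\partial_x$ and $\hat L$; this is where the derivation property of $\hat L$ is used crucially, and once it is established the lemma follows at once from Proposition \ref{thm:commTL} and the chain rule. Everything else is a routine manipulation.
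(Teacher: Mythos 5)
Your proof is correct and follows essentially the same route as the paper: chain-rule expansion of $\hat{T}\bar{f}$ in the variables $\hat{L}^p u^i$, followed by Proposition \ref{thm:commTL} to replace $\hat{T}(\hat{L}^p u^i)$ with $\hat{L}^p\bar{F}^i$. The one place you go beyond the paper is in proving the intertwining identity $\hat{L}^p\bar{F}^i=\overline{\partial_x^p F^i}$ by induction, which the paper absorbs silently into its final displayed equality; making it explicit is a genuine (if small) improvement in rigour rather than a different approach.
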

\begin{proof}
By chain rule, recall that
$$
\dev_t f=\sum_{m\geq0}\frac{\dev f}{\dev u^i_{xm}}\dev_x^m u^i_t=\sum_{m\geq0}\frac{\dev f}{\dev u^i_{xm}}\dev_x^m F^i.
$$
We can use chain rule to compute $\hat{T}\bar{f}$, too. We have
\begin{equation}\label{eq:l4-1}
\hat{T}\bar{f}=\sum_{m,n_\alpha}\frac{\dev \bar{f}}{\dev\left(\dev_x^m\dev_{x^\alpha}^{n_\alpha}u^i\right)}\hat{T}(\dev_x^m\dev_{x^\alpha}^{n_\alpha}u^i).
\end{equation}
Moreover, we have
$$
\frac{\dev \bar{f}}{\dev\left(\dev_x^m\dev_{x^\alpha}^{n_\alpha}u^i\right)}=\frac{\dev \bar{f}}{\dev\left(\hat{L}^pu^j\right)}\frac{\dev\left(\hat{L}^pu^j\right)}{\dev\left(\dev_x^m\dev_{x^\alpha}^{n_\alpha}u^i\right)},
$$
where the sum is over $(p,j)$ and the first factor is $\overline{\dev f/\dev u^j_{xp}}$. On the other hand, by the same formula \eqref{eq:l4-1} we have
$$
\hat{T}\left(\hat{L}^p u^j\right)=\frac{\dev\left(\hat{L}^p u^j\right)}{\dev \left(\dev_x^m\dev_{x^\alpha}^{n_\alpha}u^i\right)}\hat{T}\left(\dev_x^m\dev_{x^\alpha}^{n_\alpha}u^i\right).
$$
We can then write
$$
\hat{T}\bar{f}=\left(\overline{\frac{\dev f}{\dev u^j_{xp}}}\right)\hat{T}(\hat{L}^pu^j).
$$
Finally, by Proposition \ref{thm:commTL} we rewrite the latter as
$$
\hat{T}\bar{f}=\left(\overline{\frac{\dev f}{\dev u^j_{xp}}}\right)\hat{L}^p\bar{F}^j=\left.\left(\frac{\dev f}{\dev u^j_{xp}}\dev_x^pF^j\right)\right|_{\dev_x^n\to\hat{L}^n}.
$$
\end{proof}
\begin{proposition}
The Lax representation \eqref{eq:defolax} is equivalent to the deformed equation $\hat{T}u^i=\bar{F}^i$.
\end{proposition}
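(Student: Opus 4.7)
The plan is to reduce the deformed Lax equation to the bar of the original one, and then invoke the equivalence between the operator identity $M_t=[N,M]$ and the evolution PDE $u^i_t=F^i$ that holds in the undeformed setting. Writing $M=\sum_k a_k(u,u_x,\ldots)\dev_x^k$ and analogously for $N$, we have $\bar M=\sum_k\bar a_k\hat L^k$ and $\bar N=\sum_k\bar b_k\hat L^k$ by construction.

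First I would handle the left-hand side. Because $\hat T(\bar M)$ means $\hat T$ acting only on the coefficients, $\hat T(\bar M)=\sum_k\hat T(\bar a_k)\hat L^k$, and Lemma \ref{thm:mainlemma} applied to each differential polynomial $a_k$ gives $\hat T(\bar a_k)=(\dev_t a_k)|_{\dev_x^n u\to\hat L^n u}$. Summing over $k$ yields $\hat T(\bar M)=\overline{M_t}$, where the bar on the right denotes the operator obtained from $M_t$ by the standard substitution in both its coefficients and its operator factors.

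Next I would show that the bar operation respects composition of operators, so that $[\bar N,\bar M]=\overline{[N,M]}$. The key observation is that $\hat L=\dev_x+\rho_\alpha\dev_{x^\alpha}$ is a derivation and, by construction, $\hat L(\hat L^p u^i)=\hat L^{p+1}u^i$, exactly mirroring $\dev_x(u^i_{xp})=u^i_{x(p+1)}$. A direct chain-rule computation (essentially the one in the middle of the proof of Lemma \ref{thm:mainlemma}, with $\hat T$ replaced by $\hat L$ and $\bar F^i$ by $\hat L u^i$) then yields $\hat L(\bar f)=\overline{\dev_x f}$ for every differential polynomial $f$, so the Leibniz expansion used to compose two operators produces identical combinatorics in the $\dev_x$- and $\hat L$-pictures. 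Combining the two pieces produces the central identity $\hat T(\bar M)-[\bar N,\bar M]=\overline{M_t-[N,M]}$.

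To conclude, I would appeal to the analogous equivalence in the undeformed setting. Treated as an operator identity with $u^i_t$ regarded as an independent symbol, $M_t=[N,M]$ forces, coefficient by coefficient, $u^i_t=F^i$. Under the bar, $\dev_x^p u^i$ is replaced by $\hat L^p u^i$ and $u^i_t$ by $\hat T u^i$ in exactly the same pattern, so the deformed operator equation $\hat T(\bar M)=[\bar N,\bar M]$ is equivalent to $\hat T u^i=\bar F^i$, as claimed. The main technical obstacle is precisely this last faithfulness step: one must verify that no coefficient relation is gained or lost when passing from $\dev_x$-powers with dependence on the symbols $u^i_{xp}$ to $\hat L$-powers with dependence on the symbols $\hat L^p u^i$, and this rests on the fact that the freely adjoined symbol $\hat L^p u^i$ plays, in the deformed setting, a role completely analogous to that of $u^i_{xp}$ in the original one.
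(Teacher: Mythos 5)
Your proof is correct and follows essentially the same route as the paper: both establish the central identity $\hat{T}(\bar{M})-[\bar{N},\bar{M}]=\left(\dev_t M-[N,M]\right)\big|_{\dev_x^n\to\hat{L}^n}$ and then invoke the undeformed equivalence under the bar substitution. You actually supply more detail than the paper does on the commutator term, via the auxiliary identity $\hat{L}\bar{f}=\overline{\dev_x f}$ and the matching Leibniz combinatorics, whereas the paper attributes the whole displayed identity to Lemma \ref{thm:mainlemma}, which strictly speaking only covers the coefficient part $\hat{T}(\bar{a}_k)$.
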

\begin{proof}
By Lemma \ref{thm:mainlemma} we have 
$$
0=\hat{T}(\bar{M})-[\bar{N},\bar{M}]=\left(\dev_t M-[N,M]\right)|_{\dev_x^n\to\hat{L}^n}
$$
Now, the vanishing of the RHS is equivalent to $u^i_t=F^i$ with the replacement $\dev_x\to\hat{L}$, so to $\hat{T}u^i=\bar{F}^i$ as claimed.\end{proof}

\section{Integrable hierarchies and commuting flows}\label{sec:flows}
Proposition \ref{thm:commTL} can be interpreted as saying that the deformation algorithm corresponds to an implicit change of coordinates, after which $\hat{L}$ and $\hat{T}$ take the role of the (total) derivatives with respect to $x$ and $t$. Therefore, the following result is natural:
\begin{proposition}\label{thm:Tflow} Let $u^i_{t_n}=F^i_n$ and $u^i_{t_m}=F^i_m$ be two flows in the (one-dimensional) hierarchy associated to \eqref{eq:11}. For a conserved quantity $\rho_\alpha$, we denote $J_{n\alpha}$ (resp.~$J_{m\alpha}$) its corresponding currents for the $t_n$- (resp.~$t_m$-)flow. We then have
\begin{equation}\label{eq:commflows}
\hat{T}_n \bar{F}_m-\hat{T}_m\bar{F}_n=0,
\end{equation}
where we denote $\hat{T}_n=\dev_{t_n}+\bar{J}_{n\alpha}\dev_{x^\alpha}$ and $\hat{T}_m=\dev_{t_m}+\bar{J}_{m\alpha}\dev_{x^\alpha}$.
\end{proposition}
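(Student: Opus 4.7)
The plan is to reduce \eqref{eq:commflows} to the commutativity $\dev_{t_n}F^i_m=\dev_{t_m}F^i_n$ of the two flows in the original $(1+1)$-dimensional hierarchy, which is precisely the statement that they belong to the same hierarchy (equivalently, equality of the mixed partials $u^i_{t_n t_m}=u^i_{t_m t_n}$). The key enabling observation is that the proofs of Proposition \ref{thm:commTL} and Lemma \ref{thm:mainlemma} do not rely on any specific feature of the evolutionary equation $u^i_t=F^i$: they use only the conservation law $\dev_t\rho_\alpha=\dev_x J_\alpha$, so they apply verbatim to any flow in the hierarchy together with its own currents $J_{k\alpha}$ associated with the same densities $\rho_\alpha$.

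Concretely, I would first apply Proposition \ref{thm:commTL} twice, once with $(F_n,J_{n\alpha})$ in place of $(F,J_\alpha)$ and once with $(F_m,J_{m\alpha})$, thereby obtaining $[\hat{L},\hat{T}_n]=0$ and $[\hat{L},\hat{T}_m]=0$. These are exactly the commutation relations needed to make Lemma \ref{thm:mainlemma} available for each of the two deformed time derivatives. Applying it to $f=F^i_m$ along the $t_n$-flow and to $f=F^i_n$ along the $t_m$-flow yields
\begin{align*}
\hat{T}_n\bar{F}^i_m&=\left.\bigl(\dev_{t_n}F^i_m\bigr)\right|_{\dev_x^k u^j\to\hat{L}^k u^j},\\
\hat{T}_m\bar{F}^i_n&=\left.\bigl(\dev_{t_m}F^i_n\bigr)\right|_{\dev_x^k u^j\to\hat{L}^k u^j}.
\end{align*}
Subtracting and invoking the commutativity $\dev_{t_n}F^i_m=\dev_{t_m}F^i_n$ of the original hierarchy then gives \eqref{eq:commflows}.

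I do not expect any serious obstacle beyond a small bookkeeping step: one must verify that the hypotheses of Proposition \ref{thm:commTL} and Lemma \ref{thm:mainlemma} genuinely hold for an arbitrary flow $u^i_{t_k}=F^i_k$ in the hierarchy, which amounts to checking that in their proofs $F$ and $J_\alpha$ appear only through the conservation law \eqref{eq:conslaw} and never through any property special to the original $t$-flow. Once this is granted, the argument is essentially the equality of mixed partial derivatives $\hat{T}_n\hat{T}_m u^i=\hat{T}_m\hat{T}_n u^i$ rewritten in the deformed variables via the substitution rule $\dev_x^k\to\hat{L}^k$.
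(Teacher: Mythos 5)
Your proposal is correct and follows essentially the same route as the paper: the paper's proof likewise observes that Lemma \ref{thm:mainlemma} applies to each of the $\hat{T}_n$- and $\hat{T}_m$-flows, converts $\hat{T}_n\bar{F}^i_m$ and $\hat{T}_m\bar{F}^i_n$ into the substituted expressions $\left.\left(\dev_{t_n}F^i_m\right)\right|_{\dev_x\to\hat{L}}$ and $\left.\left(\dev_{t_m}F^i_n\right)\right|_{\dev_x\to\hat{L}}$, and concludes from the commutativity of the original flows. Your extra bookkeeping step (checking that Proposition \ref{thm:commTL} holds for each flow with its own currents, so that Lemma \ref{thm:mainlemma} is available) is exactly what the paper leaves implicit in the phrase ``Lemma \ref{thm:mainlemma} holds true for both''.
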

\begin{proof}
Lemma \ref{thm:mainlemma} holds true for both $\hat{T}_n$ and $\hat{T}_m$ (``$\hat{T}_n$- and $\hat{T}_m$-flow''), so we have $\hat{T}_n\bar{F}^i_m=\left.\left(\dev_{t_n}F^i_m\right)\right|_{\dev_x\to\hat{L}}$ and $\hat{T}_m\bar{F}^i_n=\left.\left(\dev_{t_m}F^i_n\right)\right|_{\dev_x\to\hat{L}}$. The commutativity of the original $t$-flows hence implies the commutativity of the $\hat{T}$-flows.
\end{proof}

Equation \eqref{eq:commflows} can be rewritten as $[\hat{T}_n,\hat{T}_m]u^i=0$. We plan to prove that the deformed $t$-flows, written as in Equation \eqref{eq:expldef}, are in involution as a $(D+1)$-dimensional evolutionary hierarchy, namely $[\dev_{t_n},\dev_{t_m}]u^i=0$. In order to do so, we first need the following Lemma:
\begin{lemma}\label{thm:lemmacurrent}
Let $J_{n\alpha}$ and $J_{m\alpha}$ as in Proposition \ref{thm:Tflow}, namely $\dev_{t_n}\rho_\alpha=\dev_x J_{n\alpha}$ and $\dev_{t_m}\rho_\alpha=\dev_x J_{m\alpha}$. Then
$$
\dev_{t_n}J_{m\alpha}-\dev_{t_m}J_{n\alpha}=0.
$$
\end{lemma}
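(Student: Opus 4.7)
The plan is to prove this by the standard equality-of-mixed-derivatives argument applied to $\rho_\alpha$, leveraging the commutativity of the $t_n$- and $t_m$-flows of the original $(1+1)$-dimensional hierarchy.

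First, I would apply $\partial_{t_n}$ to the conservation law $\partial_{t_m}\rho_\alpha=\partial_x J_{m\alpha}$ and $\partial_{t_m}$ to $\partial_{t_n}\rho_\alpha=\partial_x J_{n\alpha}$. Both are identities between differential polynomials in $u$, and the total derivatives $\partial_{t_n}$, $\partial_{t_m}$, $\partial_x$ (the time derivatives acting on $u^i$-jets through the chain rule with $u^i_{t_k}=F^i_k$) pairwise commute. Subtracting the two resulting equations yields
\begin{equation*}
\partial_x\bigl(\partial_{t_n}J_{m\alpha}-\partial_{t_m}J_{n\alpha}\bigr)=[\partial_{t_n},\partial_{t_m}]\rho_\alpha.
\end{equation*}
Integrability of the original one-dimensional hierarchy says precisely that $[\partial_{t_n},\partial_{t_m}]$ annihilates every differential polynomial in $u$, and in particular $\rho_\alpha$; hence the right-hand side vanishes and $\partial_{t_n}J_{m\alpha}-\partial_{t_m}J_{n\alpha}$ lies in $\ker\partial_x$.

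To conclude, I would observe that in the ring of differential polynomials in $u$ with no explicit dependence on the independent variables, the identity $\partial_x P=\sum_{i,k}(\partial P/\partial u^i_{xk})u^i_{x(k+1)}=0$ on the independent jet-variables forces every $\partial P/\partial u^i_{xk}$ to vanish, so $P$ must be a constant. Fixing the standard convention that currents carry no constant term (equivalently, that they vanish on the trivial solution) then promotes the equality from ``modulo constants'' to the claimed $\partial_{t_n}J_{m\alpha}-\partial_{t_m}J_{n\alpha}=0$. The main, and in fact only, subtle point is exactly this last normalization step, since $J_{n\alpha}$ and $J_{m\alpha}$ are a priori defined only up to constants; this ambiguity is harmless and will not affect the subsequent use of the Lemma to deduce $[\partial_{t_n},\partial_{t_m}]u^i=0$ for the deformed hierarchy.
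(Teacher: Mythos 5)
Your proof is correct and follows essentially the same route as the paper: apply $\dev_x$ to the difference, identify the result with $[\dev_{t_n},\dev_{t_m}]\rho_\alpha=0$ by the commutativity of the original flows, and then remove the integration constant. The only cosmetic difference is in that last step, where the paper integrates from $-\infty$ to $x$ assuming the currents decay at infinity, while you use the equivalent algebraic observation that $\ker\dev_x$ on differential polynomials consists of constants together with the normalization that currents vanish on the trivial solution.
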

\begin{proof}
Let take the $x-$derivative of the LHS. We have
$$
\dev_x\left(\dev_{t_n}J_{m\alpha}-\dev_{t_m}J_{n\alpha}\right)=\dev_{t_n}\dev_{t_m}\rho_\alpha-\dev_{t_m}\dev_{t_n}\rho_\alpha=0.
$$
Integrating the LHS of the previous equation from $-\infty$ to $x$ (which is the same that symbolically applying the primitive operator $\dev_x^{-1}$) under the standard hypothesis that $J_{m\alpha}$ vanish at $\pm\infty$, we obtain our claim.
\end{proof}
We can then prove the commutativity of the flows with respect to their time variable (recall that $\hat{T}$ contains also spatial derivatives); in particular we have $\dev_{t_m}=\hat{T}_m-\bar{J}_{m\alpha}\dev_{x^\alpha}$ (resp. for $t_n$).
\begin{proposition}
Let the $\hat{T}_m$ and $\hat{T}_n$ flows be in involution, namely $[\hat{T}_m,\hat{T}_n]u^i=0$. Then $[\dev_{t_m},\dev_{t_n}]u^i=0$.
\end{proposition}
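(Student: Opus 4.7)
My plan is to expand the ordinary time derivatives in terms of the deformed operators using the identity $\partial_{t_m}=\hat{T}_m-\bar{J}_{m\alpha}\partial_{x^\alpha}$ (and likewise for $n$), so that the given hypothesis $[\hat{T}_m,\hat{T}_n]u^i=0$ can be directly invoked. The rest reduces to bookkeeping of the ``correction'' terms and an application of the two preceding lemmas.

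Concretely, I would first write
\[
[\partial_{t_m},\partial_{t_n}]=[\hat{T}_m,\hat{T}_n]-[\hat{T}_m,\bar{J}_{n\beta}\partial_{x^\beta}]-[\bar{J}_{m\alpha}\partial_{x^\alpha},\hat{T}_n]+[\bar{J}_{m\alpha}\partial_{x^\alpha},\bar{J}_{n\beta}\partial_{x^\beta}]
\]
and apply each piece to $u^i$. The first bracket vanishes by hypothesis. In the two mixed brackets, I would use $[\hat{T}_m,\partial_{x^\beta}]u^i=-(\partial_{x^\beta}\bar{J}_{m\alpha})u^i_{x^\alpha}$ (since $\partial_{t_m}$ and $\partial_{x^\beta}$ commute and $\partial_{x^\beta}$ kills only the coefficients of $\hat{T}_m$), and analogously for $\hat{T}_n$. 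The last bracket, purely in spatial derivatives, reduces to $\bar{J}_{m\alpha}(\partial_{x^\alpha}\bar{J}_{n\beta})u^i_{x^\beta}-\bar{J}_{n\beta}(\partial_{x^\beta}\bar{J}_{m\alpha})u^i_{x^\alpha}$ since $[\partial_{x^\alpha},\partial_{x^\beta}]u^i=0$.

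Collecting everything, the quadratic terms in the $\bar{J}$'s cancel pairwise, and I am left with
\[
[\partial_{t_m},\partial_{t_n}]u^i=\bigl(\hat{T}_n(\bar{J}_{m\alpha})-\hat{T}_m(\bar{J}_{n\alpha})\bigr)u^i_{x^\alpha}.
\]
At this point I invoke Lemma \ref{thm:mainlemma} to rewrite each of $\hat{T}_n\bar{J}_{m\alpha}$ and $\hat{T}_m\bar{J}_{n\alpha}$ as $(\partial_{t_n}J_{m\alpha})|_{\partial_x^k\to\hat{L}^k}$ and $(\partial_{t_m}J_{n\alpha})|_{\partial_x^k\to\hat{L}^k}$ respectively, and then Lemma \ref{thm:lemmacurrent} gives $\partial_{t_n}J_{m\alpha}-\partial_{t_m}J_{n\alpha}=0$, which survives the substitution and concludes the proof.

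The only real obstacle is the careful bookkeeping of the commutator expansion to confirm that the eight quadratic $\bar{J}\cdot\partial\bar{J}$ terms really do cancel in pairs (they do, because each mixed bracket contributes one term of the form $\pm\bar{J}_{n\beta}(\partial_{x^\beta}\bar{J}_{m\alpha})u^i_{x^\alpha}$ which is exactly cancelled by the corresponding term coming from $[\bar{J}_{m\alpha}\partial_{x^\alpha},\bar{J}_{n\beta}\partial_{x^\beta}]$). Everything else is a direct invocation of results already proved.
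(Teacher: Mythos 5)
Your proposal is correct and follows essentially the same route as the paper: both expand the commutator using $\dev_{t_m}=\hat{T}_m-\bar{J}_{m\alpha}\dev_{x^\alpha}$ (the paper writes the equivalent expansion in the other direction), observe that the mixed and purely spatial brackets cancel since $[\dev_t,\dev_{x^\alpha}]u^i=[\dev_{x^\alpha},\dev_{x^\beta}]u^i=0$, and reduce everything to the residual term $\bigl(\hat{T}_n\bar{J}_{m\alpha}-\hat{T}_m\bar{J}_{n\alpha}\bigr)u^i_{x^\alpha}$, which is killed by Lemma \ref{thm:mainlemma} together with Lemma \ref{thm:lemmacurrent}. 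No substantive difference.
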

\begin{proof}
We have
$$
0=[\hat{T}_m,\hat{T}_n]u^i=[\dev_{t_m}+\bar{J}_{m\alpha}\dev_{x^\alpha},\dev_{t_n}+\bar{J}_{n\beta}\dev_{x^\beta}]u^i.
$$
A direct computation (recall that $\dev_{t}$ represent evolutionary equations, so $[\dev_t,\dev_{x^\alpha}]=0$) gives
\begin{align*}
0=&\;\;[\dev_{t_m},\dev_{t_n}]u^i\\
&+\bar{J}_{n\alpha}[\dev_{t_m},\dev_{x^\alpha}]u^i-\bar{J}_{m\alpha}[\dev_{t_n},\dev_{x^\alpha}]u^i+\bar{J}_{m\alpha}\bar{J}_{n\beta}[\dev_{x^\alpha},\dev_{x^\beta}]u^i\\&+\left(\hat{T}_m\bar{J}_{n\alpha}-\hat{T}_n\bar{J}_{m\alpha}\right)u^i_{x^\alpha}.
\end{align*}
All the terms in the second line vanish for the aforementioned reason and the commutativity of spatial derivatives. The third line vanishes according to Lemmas \ref{thm:mainlemma} and \ref{thm:lemmacurrent}. Hence, the $t$-flows commute.
\end{proof}
\section{Generalized deformation algorithm}\label{sec:general}
In \cite{ljh23-1} the authors propose a generalized version of the deformation algorithm for non (necessarily)-evolutionary equations, like Camassa-Holm \cite{ch93}
$$
F(u,u_t,u_x,\ldots,u_{xn,tp})=0
$$
($u_{xn,tp}:=\dev_x^n\dev_t^pu$), possessing conserved quantities
\begin{align}\label{eq:gconsq}
\dev_t\rho_\alpha&=\dev_x J_\alpha\\\notag
\rho_\alpha&=\rho_\alpha(u,u_t,u_x,\ldots,u_{xn,tp})\\\notag
J_\alpha&=J_\alpha(u,u_t,u_x,\ldots,u_{xn,tp}).
\end{align}
Note in particular that we drop the requirement $\rho_\alpha=\rho_\alpha(u)$. The deformed ``$x$''- and ``$t$''-derivatives are
\begin{align*}
\hat{L}&=\dev_x+\bar{\rho}_\alpha\dev_{x^\alpha},&
\hat{T}&=\dev_t+\bar{J}_\alpha\dev_{x^\alpha},
\end{align*}
while the deformed equation is of the form
\begin{equation}
\bar{F}=F\left(u,\hat{T}u,\hat{L}u,\ldots,\hat{L}^n\hat{T}^p u\right)=0.
\end{equation}
As in Proposition \ref{thm:commTL} (which is a necessary condition for the validity of the algorithm), we have
$$
[\hat{L},\hat{T}]=0.
$$
A direct computation shows
\begin{equation}\label{eq:gcommTL}
[\hat{L},\hat{T}]=\left(\hat{L}\bar{J}_\alpha-\hat{T}\bar{\rho}_\alpha\right)\dev_{x^\alpha}.
\end{equation}
Similarly to the proof of Lemma \ref{thm:mainlemma}, we have
	
\begin{align*}
 \hat{L}\bar{J}_\alpha&=\frac{\dev\bar{J}_\alpha}{\dev\left(\hat{L}^p\hat{T}^q u\right)}\frac{\dev\left(\hat{L}^p\hat{T}^q u\right)}{\dev\left(\dev^n_x\dev^p_t\dev^{n_\alpha}_{x^\alpha}u\right)}\hat{L}\left(\dev^n_x\dev^p_t\dev^{n_\alpha}_{x^\alpha}u\right)\\
&=\frac{\dev\bar{J}_\alpha}{\dev\left(\hat{L}^m\hat{T}^p u\right)}\hat{L}\left(\hat{L}^m\hat{T}^p u\right)\\
&=\left.\left(\frac{\dev J_\alpha}{\dev u_{xm,tp}}\dev_x u_{xm,tp}\right)\right|_{\substack{{\dev_t\to\hat{T}}\\{\dev_x\to\hat{L}}}}=\left.\left(\dev_x J_\alpha\right)\right|_{\substack{{\dev_t\to\hat{T}}\\{\dev_x\to\hat{L}}}}.
\end{align*}

In the same way,
$$
\hat{T}\bar{\rho}_\alpha=\left.\left(\dev_t\rho_\alpha\right)\right|_{\substack{{\dev_t\to\hat{T}}\\{\dev_x\to\hat{L}}}}
$$
so that the RHS of \eqref{eq:gcommTL} vanishes because of \eqref{eq:gconsq}. These two results grant a straightforward generalization of all the results of Section \ref{sec:Lax} and Section \ref{sec:flows}, establishing the validity of the generalized deformation algorithm.
\begin{example} \textbf{A new deformation of KdV equation}. We consider the conserved quantity
		\begin{equation*}
			(\frac{1}{2}u_x^2-u^3)_t=(u_xu_{xxx}-\frac{1}{2}u_{xx}^2-3u^2u_{xx}+6uu_x^2-\frac{9}{2}u^4)_x,
		\end{equation*}
		which gives the deformed ``derivatives''
		\begin{align*}
			\hat{L}&=\dev_x+\bar{\rho}\dev_{y},&
			\hat{T}&=\dev_t+\bar{J}\dev_{y},
		\end{align*}
		where
		\begin{align}
			\bar{\rho}&=\frac{1}{2}	(\hat{L}u)^2-u^3,&
			\bar{J}&=(\hat{L}u)(\hat{L}^3u)-\frac{1}{2}(\hat{L}^2u)^2-3u^2\hat{L}^2u+6u(\hat{L}u)^2-\frac{9}{2}u^4.
		\end{align}
Note that the definition of $\hat{L}$ is implicit and that we need to solve 
\begin{equation}
\bar{\rho}=\frac12\left(u_x+\bar{\rho}u_y\right)^2-u^3.
\end{equation}
We obtain two possible solutions 
\begin{equation}
\bar{\rho}=\frac{-u_xu_y+1\pm \sqrt{2u^3u_y^2-2u_xu_y+1}}{u_y^2},
\end{equation}
but only the solution with the minus sign converges to $\rho$ for $u_y\to 0$; we then take it as our deformed conserved density, using which we define $\hat{L}$, $\bar{J}$ and $\hat{T}$.

Then we have the deformed equation
	\begin{equation}
u_t=\hat{L}(\hat{L}^2u+3u^2)-\bar{J}u_y,
\end{equation}
which possesses the Lax pair
	\begin{align*}
\hat{L}^2\psi+u\psi&=\lambda \psi,& \psi_t&=4\hat{L}^3\psi+6u\hat{L}\psi+3(\hat{L}u)\psi-\bar{J}\psi_y.
	\end{align*}

From the  KdV hierarchy
	\begin{align*}
	u_{t_{2n+1}}&=F_{2n+1}=\Phi^n u_x,&
	 \Phi&=\partial^2_x+4u+2u_x\partial^{-1}_x,
\end{align*}
we obtain the deformed  KdV hierarchy
$$
u_{t_{2m+1}}	=F_{2n+1}|_{\dev_x\to\hat{L}}-\bar{J}_{2n+1}u_y
$$
with
$\bar{J}_{2n+1}=J_{2n+1}|_{\dev_x\to\hat{L}}$
and
\begin{align*}
J_1&=\rho\\
J_3&=u_xu_{xxx}-\frac{1}{2}u_{xx}^2-3u^2u_{xx}+6uu_x^2-\frac{9}{2}u^4\\
J_5&=18u^5+30u^3u_{xx}-30u^2u_{xx}-14u_x^2u_{xx}+8uu_{xx}^2-16uu_xu_{xxx}\\
&\qquad-\frac12 u_{xxx}^3+3u^2u_{xxxx}+u_{xx}u_{xxxx}-u_xu_{xxxxx}\\
&\cdots\\
J_{2n+1}&=\dev_x^{-1}\left(u_x\dev_x-3u^2\right)\left(\dev_x^2+4u+2u_x\dev_x^{-1}\right)^nu_x\\
&\cdots
\end{align*}
the conserved currents of the one-dimensional KdV hierarchy.

The deformed KdV hierarchy enjoys the property
$$
[\dev_{t_{2m+1}},\dev_{t_{2n+1}}]u=0.
$$
This property can be explicitly checked,  but it should be noted that the complete form of the equation is very complicated. Even further reductions of the system, that can be obtained for instance assuming that $u$ does not depend on the $x$ variable, retain an extremely cumbersome form. 
\end{example}

\section{Discussion}\label{sec:discussion}
The deformation algorithm devised by Lou, Jia and Hao allows to produce $(D+1)$-dimensional integrable equations starting from the corresponding $(1+1)$-dimensional ones and $(D-1)$ conservation laws. In other words, the integrability of the higher-dimensional systems relies uniquely on the integrability of the one-dimensional ones, and the solutions $u(t,x,x^1,\ldots,x^{D-1})$ of the deformed system must be solutions, in particular, of the one-dimensional one. Indeed, the original equations and solutions can always be recovered from the deformed ones by dropping the dependency of $u$ from the spatial variables $x^\alpha$. Because of this reason, we choose to refer to these systems as \emph{integrable multidimensional deformations} of the (standard) integrable equations.

Moreover, it is important to remark that, while the deformation algorithm preserves the (Lax) integrability of the equation and therefore the existence of the higher symmetries, the (bi)Hamiltonian structure of the system is lost: it is immediate to observe that, while for instance $\dev_x$ is a Hamiltonian structure for many KdV-like systems, even in the scalar case $\hat{L}$ is not skewadjoint, and a fortiori not Hamiltonian. In general, the toolkit used to solve the traditional one-dimensional equations cannot be employed fully to address their higher dimensional deformations \cite{wl23}. However, the three dimensional deformation of the KdV equation presented in \cite{ljh22} admits at least nontrivial (namely, fully 3-dimensional) single soliton and travelling wave solutions. In a similar manner, a new type of peakon solution, as well as the travelling wave solution, has been obtained for the two dimensional deformation of Camassa-Holm equation in \cite{ljh23-1} and travelling wave and kink solutions of a two-dimensional deformation of Burgers equation have been obtained in \cite{wl23}.

As we pointed out, the deformation algorithm does not preserve the Hamiltonian structure of the equation; it does not even preserve the conserved quantities, in the following sense: it is of course true that, by \eqref{eq:defocons}, the $\hat{T}$-evolution of a conserved density is in the image of $\hat{L}$, but the expression does not vanish when integrated over all the spatial domain, because $\hat{L}$ itself is not a linear combination of total derivatives in $(x, x^\alpha)$. This is even more apparent when considering the $t$-evolution.  Indeed, from $\hat{T}\rho_\alpha=\hat{L}\bar{J}_\alpha$ we have
\begin{equation}\label{eq:consfail}
\dev_t\rho_\alpha=\dev_x\bar{J}_\alpha+\rho_\beta\dev_{x^\beta} \bar{J}_\alpha - \bar{J}_\beta\dev_{x^\beta}\rho_\alpha.
\end{equation} 
$\rho_\alpha$ is the density of a conserved quantity in the $t$-evolution if and only if $\dev_t\rho_\alpha$ is a total spatial divergence, i.e. it is of the form $\dev_x \tilde{J}_\alpha +\sum\dev_{x^\beta} f_{\alpha\beta}$ for some conserved $D$-dimensional current of components $(\tilde{J}_\alpha,f_{\alpha1},\dots,f_{\alpha(D-1)})$. This is not in general the case for \eqref{eq:consfail}, implying that a conserved quantity for the original system is no longer a conserved quantity of the deformed one. 

The discrepancy between the behaviour of the original $(1+1)$-dimensional integrable system and the deformed ones, notwithstanding the preservation of the integrability, offers the opportunity to explore these new hierarchies in search of their missing elements (i.e. their conserved quantities, possible Hamiltonian and bi-Hamiltonian structure, tau-structure, etc.)
\vskip1em
\paragraph{Acknowledgement}
This work was sponsored by the National Science Foundation of China (Grants no.~12101341, 11801289, 11435005, and 11975131) and K.~C.~Wong Magna Fund in Ningbo University. M.~C.~wishes to thank Profs.~J.~Ferapontov, Qu C.~and Dr V.~Novikov for their valuable discussions.

\end{document}